\setlist[enumerate,1]{font=\bfseries,label=\arabic*.}
\numberwithin{equation}{section}
\def\@bibdataout@init{}\def\pre@bibdata{}\makeatother
\newtheorem{theorem}{Theorem}[section]
\newtheorem{lemma}[theorem]{Lemma}
\theoremstyle{definition}
\newtheorem{definition}[theorem]{Definition}
\theoremstyle{remark}
\newtheorem*{remark}{Remark}
\newtheorem*{example}{Example}
\newcommand{\F}{\mathbb{F}}
\begin{document}

\title{A Note on Clifford Stabilizer Codes for Ising Anyons}

\author{Sanchayan Dutta}
\email{dutta@ucdavis.edu}
\affiliation{University of California, Davis}


\begin{abstract}
We provide a streamlined elaboration on existing ideas that link \emph{Ising anyon} (or equivalently, Majorana) stabilizer codes to certain classes of binary classical codes. The groundwork for such Majorana-based quantum codes can be found in earlier works (including, for example, Bravyi \cite{BTL:majorana2010} and Vijay \emph{et al.}\ \cite{VF:fermion2017}), where it was observed that commuting families of fermionic (Clifford) operators can often be systematically lifted from weakly self-dual or self-orthogonal binary codes. Here, we recast and unify these ideas into a classification theorem that explicitly shows how \emph{q-isotropic} subspaces in \(\mathbb{F}_2^{2n}\) yield commuting Clifford operators relevant to Ising anyons, and how these subspaces naturally correspond to punctured self-orthogonal codes in \(\mathbb{F}_2^{2n+1}\).
\end{abstract}

\maketitle

\section{Introduction}
\label{sec:intro}

Ising anyons are non-Abelian excitations in certain two-dimensional topological phases of matter. They can encode and process quantum information by virtue of their fusion and braiding properties. A particularly relevant physical instantiation comes from \emph{Majorana zero modes} at the boundaries of topological superconductors \cite{Kitaev:unpaired2001, Ivanov:nonabelian}.

Adapting quantum error-correcting strategies to the Ising-anyon (or Majorana) context entails generalizing the familiar notion of Pauli stabilizer codes to accommodate fermionic degrees of freedom and parity constraints. Early progress in this direction was made by Bravyi, who introduced \emph{Majorana fermion codes} \cite{BTL:majorana2010}. Building on this, Vijay \emph{et al.}\ \cite{VF:fermion2017} highlighted how these fermionic stabilizers map naturally onto weakly self-dual or self-orthogonal binary codes. Related developments have investigated the broader class of \emph{Clifford stabilizer codes} \cite{Okada:quantum}, emphasizing that commuting, even-parity operators underlie valid stabilizer constraints.

In this note, we revisit and unify these ideas by focusing on the concept of \emph{q-isotropic} subspaces of \(\F_2^n\) \cite{Okada:quantum}. We show how such subspaces serve as the building blocks for defining Clifford stabilizer codes in Ising-anyon systems. Furthermore, by enlarging the ambient space to \(\F_2^{n+1}\) via a convenient mapping, these q-isotropic subspaces turn out to correspond exactly to punctured images of classical \emph{self-orthogonal} codes, providing a direct route to analyzing quantum code properties (such as distance) in purely classical terms.

One advantage of this viewpoint is that it allows us to utilize standard techniques from classical coding theory: in particular, distance bounds such as the Varshamov bound imply that one can construct families of codes with growing length, good distance, and nonvanishing rate, guaranteeing robust error protection in a broad sense. While these results have appeared in various forms in the literature, here we aim to offer a transparent exposition of why these classical arguments continue to apply in the Majorana (or Ising-anyon) context.

\section{Quantum Metrics and Clifford Codes}
\label{sec:quantum-metrics-clifford}

Here we give a concise overview of the foundational concepts from Chapter~2 of \cite{Okada:quantum}, focusing on quantum metrics as a unifying framework for assigning distances to quantum errors. This is analogous to the Hamming distance for classical coding but is adapted to genuinely quantum settings. In finite dimensions, one can specify a generating set of errors, from which all higher-distance errors are built. We also introduce the Kuperberg-Weaver notion of a quantum metric~\cite{KW:neumann}, specialized here to finite dimensions, which assigns distances to operators by specifying how they are generated at each level of the metric. Examples include classical and quantum Hamming spaces, as well as spinorial, semispinorial, and full Clifford metrics.

\subsection{Kuperberg--Weaver Metrics in Finite Dimension}

We work in a finite-dimensional Hilbert space \(\mathcal{H}\) with associated operator space \(\mathcal{B}(\mathcal{H})\). A \emph{Kuperberg--Weaver quantum metric} on \(\mathcal{B}(\mathcal{H})\) is a nested family of subspaces
\[
   \{\mathcal{E}_t\}_{t \ge 0} \;\subseteq\; \mathcal{B}(\mathcal{H}),
\]
satisfying the following axioms:
\begin{enumerate}
\item \(\mathcal{E}_0 = \mathbb{C}\,I_{\mathcal{H}}\). Only scalar multiples of the identity have distance zero.
\item \(\mathcal{E}_t^* = \mathcal{E}_t\) for each \(t\). Each level is closed under adjoints, so these subspaces consistently contain all relevant “errors” of a given size.
\item \(\mathcal{E}_s\,\mathcal{E}_t \subseteq \mathcal{E}_{s+t}\). This is a “triangle inequality” condition on errors, ensuring that composing an error of degree \(s\) with another of degree \(t\) yields an error of degree at most \(s+t\).
\end{enumerate}

\paragraph*{Graph Metrics.}
A convenient way to build such a quantum metric in finite dimensions is to specify a \emph{generating} subspace
\[
  \mathcal{E} \;\subseteq\; \mathcal{B}(\mathcal{H}),
\]
containing the identity \(I_{\mathcal{H}}\) and closed under adjoints. We interpret \(\mathcal{E}\) as the space of “lowest-degree” or single-step nontrivial errors~\cite{Okada:quantum}. Define
\[
  \mathcal{E}_0 
  = 
  \mathbb{C}\,I_{\mathcal{H}}, 
  \quad\text{and for each integer } t \ge 1,
\]
\[
  \mathcal{E}_t 
  = 
  \operatorname{span}\!\Bigl\{
     X_1 X_2 \cdots X_k 
     \,\Big|\,
     X_i \in \mathcal{E},\,
     0 \le k \le t
  \Bigr\}.
\]
Hence, \(\mathcal{E}_t\) is the linear span of all products of up to \(t\) elements from \(\mathcal{E}\). Each product is just a single operator in \(\mathcal{B}(\mathcal{H})\), but we say it has “distance” \(k\) if it is formed by multiplying \(k\) different elements of \(\mathcal{E}\). Consequently, any operator in \(\mathcal{B}(\mathcal{H})\) that requires more than one factor from \(\mathcal{E}\) will appear in \(\mathcal{E}_t\) for some \(t>1\). This family \(\{\mathcal{E}_t\}\) satisfies the three axioms above and is often called a \emph{graph metric}~\cite{Okada:quantum}. 

\subsection{Basic Examples}

\begin{itemize}
\item \textbf{Quantum Hamming Metric.} 
  For an \(n\)-qubit system, choose \(\mathcal{E}\) as the set of all single-qubit Pauli operators (including the identity). Then \(\mathcal{E}_t\) consists of all errors acting nontrivially on at most \(t\) qubits, matching the standard quantum Hamming distance.

\item \textbf{Lie-Algebraic Metrics.}
  If \(\mathfrak{g}\) is a (semi)simple Lie algebra with a representation \(\phi:\mathfrak{g}\to \mathcal{B}(\mathcal{H})\), let
  \[
    \mathcal{E} 
    = 
    \mathrm{span}\bigl\{I_{\mathcal{H}},\,\phi(X)\colon X\in \mathfrak{g}\bigr\}.
  \]
  Then \(\mathcal{E}_t\) comprises products of up to \(t\) elements of \(\phi(\mathfrak{g})\). This includes the familiar \(\mathfrak{su}(2)\) ladder-operator metric, and spinorial or semispinorial metrics from \(\mathfrak{so}(2n+1)\) or \(\mathfrak{so}(2n)\).
\end{itemize}

\subsection{Spinorial and Semispinorial Codes}

\paragraph{Spinorial Codes.}
Consider \(\mathfrak{so}(2n+1)\subseteq \mathrm{Cl}(2n+1)\), acting on a \(2^n\)-dimensional spinor space \(\mathcal{H}^{(n)}\). One identifies rank-2 Clifford products \(\{e_k e_\ell\}\) with generators of \(\mathfrak{so}(2n+1)\). Placing these (plus the identity) into \(\mathcal{E}\) yields a \emph{spinorial metric}: \(\mathcal{E}_1\) are the rank-2 errors, and \(\mathcal{E}_t\) is spanned by products of up to \(t\) such factors. 

\begin{example}[\(\mathfrak{so}(5)\) Spinors]
When \(m=5\), \(\mathrm{Cl}(5)\) has generators \(e_1,\dots,e_5\). The Lie algebra \(\mathfrak{so}(5)\) is spanned by \(\{\,e_i e_j\}\). On a 4-dimensional spinor space (\(n=2\)),
\[
   \mathcal{E}
   =
   \mathrm{span}\bigl\{I,\,e_1 e_2,\,e_1 e_3,\ldots,e_4 e_5\bigr\}.
\]
Then \(\mathcal{E}_2\) contains all products of two such factors, and so on.
\end{example}

\paragraph{Semispinorial Codes.}
For \(\mathfrak{so}(2n)\subseteq \mathrm{Cl}(2n)\), the spinor representation of dimension \(2^n\) splits into two irreps \(\mathcal{H}_{+}^{(n)}\oplus \mathcal{H}_{-}^{(n)}\). Restricting to \(\mathcal{H}_{+}^{(n)}\) (or \(\mathcal{H}_{-}^{(n)}\)) yields a semispinorial code, again with rank-2 Clifford operators as distance-1 errors.

\begin{example}[\(\mathfrak{so}(4)\)]
Here, \(\mathrm{Cl}(4)\) is generated by \(\{e_1,e_2,e_3,e_4\}\). The algebra \(\mathfrak{so}(4)\) corresponds to products \(e_i e_j\). Since the total spinor space is 4-dimensional, decomposing into two 2-dimensional parts, one can define 
\(\mathcal{E}=\mathrm{span}\{I,\,e_1 e_2,\dots,e_3 e_4\}\) on just \(\mathcal{H}_{+}^{(2)}\), yielding a semispinorial metric there.
\end{example}

\subsection{Isometries and the Spin Group Connection}

In \cite{Okada:quantum}, the quantum-metric isometry group is identified with a Spin group and this connection arises naturally in the spinorial and semispinorial cases. For metrics derived from \(\mathfrak{so}(m)\), one embeds \(\mathfrak{so}(m)\) into \(\mathrm{Cl}(m)\) and realizes the spin or semispin representation \(\mathcal{H}\) via \(\mathrm{Cl}(m)\)-actions. All unitaries preserving the resulting distance filtration \(\{\mathcal{E}_t\}\) come from the double cover \(\mathrm{Spin}(m)\). Since \(\mathrm{Spin}(m)\) encodes the same orthogonal symmetries as \(\mathfrak{so}(m)\), up to sign, every quantum-metric automorphism must lie in (a quotient of) \(\mathrm{Spin}(m)\). Thus, in the spinorial or semispinorial setting,
\[
   \mathrm{Isom}(\mathcal{H}, \{\mathcal{E}_t\}) \;\cong\;\mathrm{Spin}(m).
\]

\subsection{Full Clifford Metrics and Codes}

One can also take all Clifford generators \(\{I,e_1,\dots,e_m\}\subset \mathrm{Cl}(m)\) as distance-1 errors. This is the \emph{full Clifford quantum metric}, closely tied to Ising-anyon systems. A known family of \emph{Clifford Hamming codes} arises by extending classical Hamming codes into q-isotropic subspaces in \(\mathbb{F}_2^{2n}\)~\cite{Okada:quantum}, yielding minimum distance 3 in the Clifford sense.

\begin{example}[Clifford Hamming Codes]
For \(n=2^s-1\), form the dual of a classical \([n,n-s]\) Hamming code and embed it into \(\mathbb{F}_2^{2n}\). The commuting even Clifford operators ensure a distance-3 code \(\bigl[[\,2^s-1,\;2^s-s-2\,\bigr]]_{\mathrm{Cl}}\), paralleling standard qubit Hamming codes but in the full Clifford setting.
\end{example}

\section{Ising Anyons in the Clifford Algebra Framework}
\label{sec:ising-anyons}

We now explain how \emph{Ising anyons}, or their Majorana-fermion realization, fit naturally into the codes built from Clifford algebras. Non-Abelian Ising anyons appear in certain topological phases of matter and can be represented by Majorana operators \(\{\gamma_i\}\) satisfying
\[
\{\gamma_i,\gamma_j\} = 2\,\delta_{ij}, 
\quad \gamma_i^\dagger = \gamma_i, 
\quad \gamma_i^2 = I.
\]
When two Ising anyons labeled \(i\) and \(j\) are braided, the resulting evolution is
\[
U_{ij}
= \frac{e^{i\alpha}}{\sqrt{2}}\bigl(I + \gamma_i\,\gamma_j\bigr),
\quad i<j,
\]
for some phase \(\alpha\) \cite{Simon:topological}. These braiding unitaries generate qubit (or qudit) gates by permuting anyons in a topological quantum computer \cite{Ivanov:nonabelian,LO:majorana}.

Braiding anyons \(i\) and \(j\) involves the pairwise product \(\gamma_i\gamma_j\). In the Clifford codes introduced earlier, one works with products of elementary generators \(e_i e_j\). The Majorana operators \(\gamma_i\) can be identified with generators of a real Clifford algebra \(\mathrm{Cl}(2n)\) or \(\mathrm{Cl}(2n+1)\). An \emph{even} product of Majoranas commutes with the total parity operator (defined below) and describes a multi-anyon braiding, while an \emph{odd} product flips parity and describes single-anyon or other poisoning processes. Both scenarios fit the operator filtration used in Clifford algebra codes.

A key concern in Ising-anyon hardware is that single-anyon poisoning corresponds to an odd-length product \(\gamma_{i_1}\cdots\gamma_{i_\ell}\). Such a product anticommutes with the global parity operator
\[
\mathcal{P}
= (i)^n\,\gamma_1\,\gamma_2\cdots\gamma_{2n},
\]
and toggles between the two fermion-parity sectors of the system \cite{Ivanov:nonabelian,LO:majorana,Simon:topological}. One aims to detect or correct any such event, so the odd product \(\gamma_{i_1}\cdots\gamma_{i_\ell}\) should act trivially on the code (up to a known phase) or send code states out of the code in a detectable way.

By identifying each \(\gamma_i\) with a Clifford generator \(U_i\) on \(\bigl(\mathbb{C}^2\bigr)^{\otimes n}\)~\cite{Okada:quantum}, a product \(\gamma_{i_1}\cdots\gamma_{i_\ell}\) becomes an operator \(\Gamma_y\). Codes that detect errors of weight \(\ell\) will handle precisely those even or odd Majorana products of length up to \(\ell\). Thus \emph{Ising-anyon codes} coincide with \emph{Clifford algebra codes} generated by products of \(\{U_i\}\), with the even/odd distinction reflecting whether parity is preserved or flipped.

In this picture, braiding of Ising anyons appears in operators of the form \(I+\gamma_i\gamma_j\), while poisoning corresponds to single-anyon (or other odd-length) Majorana products. Both processes emerge naturally in the Clifford algebra framework, where the relations among \(\gamma_i\) and \(e_i\) define how distance and detection criteria are characterized. Hence one can directly apply the q-isotropy constructions of Section~\ref{sec:q-isotropy-codes} to Ising anyons, using classical self-orthogonality properties and standard distance-bounding methods to guarantee robust parity-aware error correction.

\section{q-Isotropic Subspaces and Their Role in Clifford Codes}
\label{sec:q-isotropy-codes}

We work over \(\mathbb{F}_2^{2n}\) with the bilinear form
\[
q(x,y) = \mathrm{wt}(x)\,\mathrm{wt}(y) + \bigl(x \cdot y\bigr)
\quad(\mathrm{mod}\;2),
\]
where \(\mathrm{wt}(x)\) is the Hamming weight of \(x\) and \(x\cdot y\) is the standard dot product in \(\mathbb{F}_2\).

\begin{definition}[q-Isotropic Subspace]
A subspace \(C \subseteq \mathbb{F}_2^{2n}\) is \emph{q-isotropic} if \(q(x,y) = 0\) for all \(x,y\in C\). Equivalently, defining the q-orthogonal complement
\[
C^{\perp_q} 
= 
\{\;x \in \mathbb{F}_2^{2n} \mid q(x,c)=0 \text{ for all } c \in C\},
\]
we have \(C\subseteq C^{\perp_q}\). 
\end{definition}

Each pair of vectors in a q-isotropic subspace is mutually q-orthogonal. Assigning to each \(x\in \mathbb{F}_2^{2n}\) an \emph{even-weight} Clifford operator \(\Gamma_x\) on a \(2^n\)-dimensional Hilbert space~\(\mathcal{H}^{(n)}\), q-isotropy ensures that \(\{\Gamma_x : x\in C\}\) commute. This allows a simultaneous eigenspace decomposition, and each common eigenspace forms a quantum code.

Let \(C\) be a q-isotropic subspace of dimension \(n-k\). Pick a basis \(S\subseteq C\) and eigenvalues \(\{c_x\in\{-1,1\} : x\in C\}\). The projector onto the designated code space is
\[
P = \frac{1}{2^{\,n-k}} \prod_{x\in S}\!\bigl(I_{\mathcal{H}^{(n)}} + c_x\,\Gamma_x\bigr),
\]
yielding a code of dimension \(2^k\). Any even Clifford error \(\Gamma_y\) is \emph{detectable}, meaning \(P\Gamma_yP\propto P\) or \(0\), precisely if \(y\in C\) or \(y\notin C^{\perp_q}\). Thus, q-isotropy is the essential condition for constructing stabilizer-like codes in a Clifford framework. We denote such codes by
\(\bigl[[n,k\bigr]]_{\mathrm{Cl}}\).

\subsection{Example: Clifford Hamming Codes}

An illustrative construction~\cite{Okada:quantum} uses \emph{classical Hamming codes} to build q-isotropic subspaces. Let \(s\ge3\) and set \(n=2^s-1\). Consider the dual \(C'\subseteq\mathbb{F}_2^n\) of the classical Hamming code of length \(n\). The space \(C'\) is \(s\)-dimensional and all nonzero words have even weight. Define
\[
S = \bigl\{\,(x,x) : x\in C'\bigr\}\;\cup\;\bigl\{\,(1_n,\,0_n)\bigr\}
\,\subseteq\,\mathbb{F}_2^{\,2n},
\]
and let \(C=\mathrm{span}(S)\). One checks that \(C\) is q-isotropic of dimension \(s+1\). The associated Clifford operators \(\{\Gamma_x : x\in C\}\) commute, so each simultaneous eigenspace yields a code of dimension \(2^{n-(s+1)}\). This code detects all even Clifford errors \(\Gamma_y\) for \(y\notin C^{\perp_q}\) (or handles them trivially if \(y\in C\)).

\paragraph{Distance 1 and 2 Errors.}
For a Clifford operator \(\Gamma_y\) with \(\mathrm{wt}(y)=1\) or \(\mathrm{wt}(y)=2\), the classical Hamming code structure ensures that any such vector \(y\) either lies in \(C\) (so it acts trivially on the code, up to a phase) or anticommutes with at least one \(\Gamma_x\in C\) (thus making \(P\,\Gamma_y\,P=0\) on the code space). Hence any weight 1 or 2 even Clifford error is in the stabilizer or else is detectable.

\paragraph{Distance via Dual Codes.}
No nonzero codeword of weight \(\le2\) lies in the dual code \(C'\), so the dual distance \(d^\perp\) of \(C'\) is at least 3. This translates directly into a Clifford distance of at least 3 for the quantum code. In other words, a self-orthogonal code \(C\subseteq C^\perp\) has \(C^\perp\) of minimum distance \(\ge3\), and the resulting quantum code then detects up to 2-weight even errors.

\subsection{q-Isotropic Subspaces in \(\F_2^n\)}

Throughout this section, we consider \(\F_2^n\) endowed with
\[
  q(x,y) 
  = x\cdot y + \mathrm{wt}(x)\,\mathrm{wt}(y)
  \quad(\mathrm{mod}\;2).
\]

\begin{definition}
A subspace \(S \subseteq \F_2^n\) is \emph{q-isotropic} if 
\[
  q(x, y) 
  = 0
  \quad\text{for all } x, y \in S.
\]
\end{definition}

Below are two equivalent characterizations of q-isotropic subspaces.

\subsubsection*{Characterization (1)}

\begin{theorem}[Two Types of \(q\)-Isotropic Subspaces]\label{thm:qIsotropicChar1}
Let \(S \subseteq \F_2^n\) be a \(q\)-isotropic subspace under
\(
  q(x,y) 
  = x\cdot y 
  + \mathrm{wt}(x)\,\mathrm{wt}(y)
  \;(\bmod\,2).
\)
Then \(S\) is exactly one of the following:
\begin{itemize}
\item \emph{All-even case.}  
  \(S\) has only even-weight vectors and is a self-orthogonal linear code 
  \(\,C \subseteq C^{\perp}\subseteq \F_2^n\).
\item \emph{Mixed-parity case.}  
  \(S\) contains both even- and odd-weight vectors. Let \(C\subseteq S\) be the set of all even-weight vectors in \(S\). Then \(C\subseteq C^\perp\), and there is an odd-weight \(u\in S\) with \(u\in C^\perp\setminus C\). Moreover,
  \[
     S = C \cup \bigl(C + u\bigr),
  \]
  and \(\dim(S)=\dim(C)+1\). Half the vectors of \(S\) are even-weight and half are odd-weight.
\end{itemize}
\end{theorem}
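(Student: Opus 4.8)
The plan is to exploit that the parity map $\pi\colon\F_2^n\to\F_2$, $\pi(x)=\mathrm{wt}(x)\bmod 2$, is a \emph{linear} functional: indeed $\pi(x)=x\cdot\mathbf 1$ for $\mathbf 1=(1,\dots,1)$, so $q(x,y)=x\cdot y+\pi(x)\pi(y)$ is a genuine symmetric bilinear form (one should record this reduction of $\mathrm{wt}(x)\,\mathrm{wt}(y)$ to $(x\cdot\mathbf 1)(y\cdot\mathbf 1)$ at the outset). A first observation is $q(x,x)=x\cdot x+\pi(x)^2\equiv\mathrm{wt}(x)+\mathrm{wt}(x)\equiv 0\pmod 2$, so $q$ is alternating; hence q-isotropy of a subspace $S$ is simply the requirement $q(x,y)=0$ for all $x,y\in S$, checkable on any basis. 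The whole classification then hinges on the restriction $\pi|_S$: as a linear functional on $S$ it is either identically zero or surjective onto $\F_2$. These alternatives are exhaustive and mutually exclusive, which is exactly the ``exactly one of the following'' dichotomy of the statement (the all-even case being $\pi|_S\equiv 0$, the mixed case being $\pi|_S$ surjective).

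In the \emph{all-even case} every $x\in S$ has even weight, so $\pi(x)=0$ and $q(x,y)=x\cdot y$ on $S$; thus ``$S$ is q-isotropic'' literally reads ``$x\cdot y=0$ for all $x,y\in S$,'' i.e.\ $S\subseteq S^\perp$, and we take $C=S$. In the \emph{mixed-parity case}, put $C=\ker(\pi|_S)$, the even-weight vectors of $S$; since $\pi|_S$ is a nonzero functional, $C$ has codimension $1$, giving $\dim S=\dim C+1$ and the exact half/half split of parities, and $S=C\cup(C+u)$ for any $u\in S\setminus C$ (necessarily odd weight). On $C$ we again have $q(x,y)=x\cdot y$, so q-isotropy forces $C\subseteq C^\perp$; and for $x\in C$, $q(u,x)=u\cdot x+\pi(u)\pi(x)=u\cdot x$ (as $\pi(x)=0$), so $q(u,x)=0$ yields $u\in C^\perp$, while $u\notin C$ by parity, establishing $u\in C^\perp\setminus C$. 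Conversely, by bilinearity these two conditions already make all of $C\cup(C+u)$ q-isotropic, since the only cross terms are $q(u,c)=u\cdot c=0$ and $q(u,u)=0$, so the description is tight.

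There is no serious obstacle here; the only point demanding care is the parity bookkeeping---recognising that the $\pi(x)\pi(y)$ term of $q$ vanishes identically on the even-weight subspace $C$ and survives only as the single cross term $q(u,c)=u\cdot c$ in the mixed case. This is precisely what collapses the problem to ordinary binary self-orthogonality ($C\subseteq C^\perp$) plus, in the mixed case, the choice of one odd-weight ``glue'' vector $u\in C^\perp\setminus C$, and it is worth stating the linearity of $\pi\bmod 2$ explicitly so that every step above is manifestly a statement about the bilinear form $q$.
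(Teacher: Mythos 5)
Your proof is correct and follows essentially the same route as the paper: split \(S\) by parity, identify the even-weight subspace \(C\) (on which \(q\) reduces to the dot product, giving \(C\subseteq C^\perp\)), and write \(S=C\cup(C+u)\) with an odd-weight \(u\in C^\perp\setminus C\). The only cosmetic difference is that you obtain \(\dim C=\dim S-1\) from the kernel of the linear parity functional \(x\mapsto x\cdot\mathbf{1}\), whereas the paper gets the same fact by the coset-counting argument of Lemma~\ref{lem:EvenWeightSubcode}.
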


\begin{proof}[Sketch]
\textbf{Step 1.}  
For each \(x\neq0\in S\), \(q(x,x)= x\cdot x + (\mathrm{wt}(x))^2 = 0\).

\noindent\textbf{Step 2.}  
If all vectors of \(S\) have even weight, then \(q(x,y)=x\cdot y\). Since \(S\subseteq S^\perp\) under \(q\), we get \(x\cdot y=0\) for all \(x,y\in S\). Hence \(S\) is self-orthogonal (all-even).

\noindent\textbf{Step 3.}  
If \(S\) has both even- and odd-weight vectors, let \(C\subseteq S\) be those of even weight. By Lemma~\ref{lem:EvenWeightSubcode}, 
\(\dim(C)=\dim(S)-1\). Pick \(u\) odd-weight in \(S\). For any other odd-weight \(x\in S\), \((x+u)\) is even-weight and thus lies in \(C\). Hence 
\[
S = C \cup (C+u).
\]
One checks \(C\subseteq C^\perp\) and \(u\in C^\perp\!\setminus C\). Over \(\F_2\), \(C\cup(C+u)\) is closed under addition,
so \(S\) is a linear subspace of codimension~1 above~\(C\).
\end{proof}

\begin{lemma}[Even-Weight Subcode]\label{lem:EvenWeightSubcode}
Let \(D \subseteq \F_2^n\) be a linear code of dimension \(k\). 
If \(D\) contains a codeword of odd weight \(u\), then the set
\[
  D_{\mathrm{even}} 
  = 
  \bigl\{\, x \in D \mid \mathrm{wt}(x) \text{ is even} \bigr\}
\]
is a linear subcode of dimension \(k-1\). Moreover,
\[
  D 
  = 
  D_{\mathrm{even}} 
  \cup
  \bigl(u + D_{\mathrm{even}}\bigr),
\]
and exactly half of the codewords in \(D\) are of even weight.
\end{lemma}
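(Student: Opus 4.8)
The plan is to realize $D_{\mathrm{even}}$ as the kernel of a single $\F_2$-linear functional restricted to $D$, after which every assertion follows from rank--nullity and the coset structure of a codimension-one subgroup. First I would introduce the \emph{parity functional}
\[
  \pi\colon \F_2^n \to \F_2,
  \qquad
  \pi(x) = \sum_{i=1}^{n} x_i \equiv \mathrm{wt}(x) \pmod 2 .
\]
The one genuine computation is to check that $\pi$ is $\F_2$-linear, i.e.\ $\pi(x+y) = \pi(x) + \pi(y)$; this is immediate from the identity $\mathrm{wt}(x+y) = \mathrm{wt}(x) + \mathrm{wt}(y) - 2\,\lvert \mathrm{supp}(x)\cap\mathrm{supp}(y)\rvert$, which reduces mod $2$ to additivity. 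Consequently $\pi$ is a group homomorphism and $D_{\mathrm{even}} = \{\, x \in D : \pi(x) = 0 \,\} = \ker\bigl(\pi|_D\bigr)$ is a linear subspace of $D$.

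Next I would use the hypothesis. Since $u \in D$ has odd weight, $\pi(u) = 1$, so the restriction $\pi|_D\colon D \to \F_2$ is nonzero and hence surjective. Rank--nullity then gives $\dim D_{\mathrm{even}} = \dim D - \dim \F_2 = k-1$, which is the dimension claim. In particular $\lvert D_{\mathrm{even}}\rvert = 2^{k-1}$ while $\lvert D\rvert = 2^{k}$, so exactly half of the codewords of $D$ are even-weight.

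Finally, for the coset decomposition: the two fibers of $\pi|_D$ are $(\pi|_D)^{-1}(0) = D_{\mathrm{even}}$ and $(\pi|_D)^{-1}(1)$, and they partition $D$ because $\pi|_D$ is surjective. Since $\pi(u)=1$, the coset $u + D_{\mathrm{even}}$ is contained in $(\pi|_D)^{-1}(1)$, and it has the same cardinality $2^{k-1}$, so the two sets coincide. Therefore $D = D_{\mathrm{even}} \cup (u + D_{\mathrm{even}})$ as a disjoint union, exactly the stated equality (which also re-confirms the equal-halves count).

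I do not expect a serious obstacle here. The only point that requires any care is the verification that weight-parity is additive over $\F_2$, i.e.\ the mod-$2$ reduction of the inclusion--exclusion formula for $\mathrm{wt}(x+y)$; and one should take care to genuinely invoke the hypothesis that $D$ contains an odd-weight word, since otherwise $\pi|_D$ is the zero map and one would instead have $D_{\mathrm{even}} = D$, with the dimension and coset statements failing.
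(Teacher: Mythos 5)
Your proposal is correct and takes essentially the same route as the paper's own proof: both rest on the mod-2 additivity of Hamming weight, identify \(D_{\mathrm{even}}\) as the even-parity part of \(D\), and obtain the disjoint coset decomposition \(D = D_{\mathrm{even}} \cup (u + D_{\mathrm{even}})\), with the dimension \(k-1\) following from the equal split of \(2^k\) codewords. Packaging this via the parity functional \(\pi\) and rank--nullity is merely a tidier formalization of the paper's explicit coset-counting step, and you correctly flag that the hypothesis of an odd-weight word \(u\) is what makes \(\pi|_D\) surjective.
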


\begin{proof}[Sketch]
If \(x\) and \(y\) both have even weight, then \(x+y\) has even weight, so \(D_{\mathrm{even}}\) is closed under addition.  

Next, pick an odd-weight codeword \(u\in D\). If \(x\in D\) is also odd, 
then \(x+u\) is even-weight, so it lies in \(D_{\mathrm{even}}\). Conversely, 
adding \(u\) to any vector in \(D_{\mathrm{even}}\) produces an odd-weight vector in \(D\). Hence 
\[
  D 
  = 
  D_{\mathrm{even}} 
  \cup
  \bigl(u + D_{\mathrm{even}}\bigr).
\]
These two sets form disjoint cosets.  

Because \(D\) has dimension \(k\), it has \(2^k\) total codewords. The two cosets each contain \(2^{k-1}\) vectors, implying \(\dim(D_{\mathrm{even}})=k-1\).  
\end{proof}

\subsubsection*{Characterization (2)}

\begin{theorem}\label{thm:qIsotropicChar2}
A subspace \(S \subseteq \F_2^n\) is \(q\)-isotropic (for 
\(\,q(x,y)=x\cdot y + \mathrm{wt}(x)\,\mathrm{wt}(y)\)) 
if and only if there is a self-orthogonal, all-even code
\[
   \widetilde{C} \;\subseteq\; \F_2^{\,n+1}
\]
such that puncturing \(\widetilde{C}\) on one coordinate yields~\(S\).
\end{theorem}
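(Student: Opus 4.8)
The plan is to set up the correspondence via the standard \emph{parity-extension} map and to check that it converts the form $q$ into the ordinary inner product. Concretely, define $E\colon \F_2^n \to \F_2^{n+1}$ by $E(x) = \bigl(x,\ \mathrm{wt}(x)\bmod 2\bigr)$, i.e.\ append to $x$ its parity bit. I would first record two elementary facts: $(i)$ since $x\mapsto \mathrm{wt}(x)\bmod 2$ is the $\F_2$-linear functional $x\mapsto\sum_i x_i$, the map $E$ is $\F_2$-linear and injective, and every vector $E(x)$ has even Hamming weight because $\mathrm{wt}\bigl(E(x)\bigr) = \mathrm{wt}(x) + (\mathrm{wt}(x)\bmod 2)$; $(ii)$ for the standard dot product on $\F_2^{n+1}$,
\[
   E(x)\cdot E(y)
   = x\cdot y + (\mathrm{wt}(x)\bmod 2)(\mathrm{wt}(y)\bmod 2)
   \equiv x\cdot y + \mathrm{wt}(x)\,\mathrm{wt}(y)
   = q(x,y) \pmod 2 .
\]
Thus the ordinary inner product on extended vectors is exactly the form $q$ on the unextended ones, which is the crux of the whole argument.

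For the ($\Rightarrow$) direction I would take a $q$-isotropic $S\subseteq\F_2^n$ and set $\widetilde C = E(S)\subseteq\F_2^{n+1}$. Then $\widetilde C$ is a linear subspace (image of a linear map), it is all-even by fact $(i)$, and it is self-orthogonal because $E(x)\cdot E(y) = q(x,y) = 0$ for all $x,y\in S$ by fact $(ii)$. Puncturing $\widetilde C$ on the last coordinate sends $E(x) = \bigl(x,\mathrm{wt}(x)\bmod 2\bigr)$ back to $x$, so the punctured code is precisely $S$ (and $\dim\widetilde C = \dim S$ by injectivity of $E$). That proves every $q$-isotropic subspace is a puncturing of such a $\widetilde C$.

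For the ($\Leftarrow$) direction I would start from an all-even, self-orthogonal $\widetilde C\subseteq\F_2^{n+1}$ and let $S\subseteq\F_2^n$ be its puncturing on a coordinate $j$; $S$ is automatically a subspace. Given $x,y\in S$, lift them to $\hat x,\hat y\in\widetilde C$ by reinserting bits $a,b$ in position $j$. The all-even hypothesis forces $a = \mathrm{wt}(x)\bmod 2$ and $b = \mathrm{wt}(y)\bmod 2$ (otherwise $\hat x$ or $\hat y$ would have odd weight), so $\hat x,\hat y$ are exactly the parity-extensions of $x,y$ at coordinate $j$; self-orthogonality then gives
\[
   0 = \hat x\cdot\hat y = x\cdot y + (\mathrm{wt}(x)\bmod 2)(\mathrm{wt}(y)\bmod 2) = q(x,y),
\]
so $S$ is $q$-isotropic. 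Since ``all-even'', ``self-orthogonal'', and the form $q$ are all invariant under coordinate permutations, the choice of $j$ is immaterial and one may take $j=n+1$ throughout.

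I do not expect a serious obstacle here: the proof is essentially a one-line change of viewpoint plus bookkeeping. The one place that needs genuine (if minor) care is the ($\Leftarrow$) step, where one must observe that the even-weight condition pins the deleted coordinate down to be the parity bit, so that each $x\in S$ has a \emph{unique} preimage in $\widetilde C$ and the inner product on $\widetilde C$ collapses onto $q$ without ambiguity. The only other thing to verify is that $E$ is truly $\F_2$-linear, i.e.\ $\mathrm{wt}(x+y)\equiv\mathrm{wt}(x)+\mathrm{wt}(y)\pmod 2$.
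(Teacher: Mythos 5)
Your proposal is correct and follows essentially the same route as the paper: extend each \(x\in S\) by its parity bit, observe that the standard inner product on the extended vectors equals \(q(x,y)\), and reverse the construction by puncturing, with the all-even hypothesis forcing the punctured coordinate to be the parity bit. Your write-up is slightly more careful than the paper's sketch in spelling out the linearity of the extension map and the uniqueness of the lift in the \((\Leftarrow)\) direction, but the underlying argument is identical.
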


\begin{proof} [Sketch]
\textbf{(\(\Rightarrow\)):}  
Given a \(q\)-isotropic \(S\subseteq \F_2^n\), define
\[
\widetilde{C}
=\bigl\{(x,\mathrm{wt}(x)) : x\in S\bigr\}
\;\subseteq\;\F_2^{\,n+1}.
\]
Each \((x,\mathrm{wt}(x))\) has even weight, and any two such codewords 
\(\,(x,\mathrm{wt}(x)), (y,\mathrm{wt}(y))\) are orthogonal because
\[
 (x,\mathrm{wt}(x)) \cdot (y,\mathrm{wt}(y))
 = x\cdot y + \mathrm{wt}(x)\,\mathrm{wt}(y)
 = q(x,y) = 0.
\]
Puncturing on the last coordinate recovers~\(S\).

\noindent\textbf{(\(\Leftarrow\)):}  
Conversely, if 
\(\widetilde{C}\subseteq \F_2^{\,n+1}\) is all-even and self-orthogonal,
define \(S\subseteq \F_2^n\) to be the set of \(x\) for which
\((x,y)\in\widetilde{C}\) for some \(y\in\F_2\).  
Orthogonality in \(\widetilde{C}\) implies 
\(q(x,x')=0\) for all \(x,x'\in S\). Hence \(S\) is \(q\)-isotropic.
\end{proof}

\begin{remark}
There is a one-to-one correspondence between 
\(q\)-isotropic subspaces of \(\F_2^n\) 
and all-even, self-orthogonal codes in \(\F_2^{n+1}\), realized by \emph{puncturing} and \emph{extending} (adding a parity coordinate).
\end{remark}

\section{Dual Distance and Column Independence}
\label{sec:dual-distance}

We next show that if a linear \([n,k,d]\) code \(C\) over \(\F_q\) has dual code
\(C^\perp\) with dual distance \(d^\perp \ge L\),
then every set of \(L-1\) columns in a generator matrix of \(C\) is
linearly independent. Conversely, if each set of \(L-1\) columns is linearly independent,
then \(d^\perp \ge L\).

\begin{theorem}\label{thm:dualdistance}
Let \(C\subseteq \F_q^n\) be a linear \([n,k,d]\) code with generator matrix \(G\) and dual code \(C^\perp\). Then
\[
\begin{aligned}
d^\perp &\ge L \quad\Longleftrightarrow\;\; \text{every set of } (L-1) \text{ columns}\\
&\quad\quad\quad\quad\quad \text{of } G \text{ is linearly independent.}
\end{aligned}
\]
\end{theorem}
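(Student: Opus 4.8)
The plan is to reduce the dual-distance condition to a pure statement about the columns of the generator matrix, using the standard identification \(C^{\perp}=\ker G\). Write \(G=(g_1\mid g_2\mid\cdots\mid g_n)\) with each column \(g_i\in\F_q^{\,k}\), and recall \(\operatorname{rank}G=k\). The starting point is the duality dictionary: a vector \(v=(v_1,\dots,v_n)\in\F_q^n\) lies in \(C^{\perp}\) exactly when it is orthogonal to every row of \(G\), i.e.\ \(\sum_{i=1}^{n}v_i\,g_i=0\) (equivalently \(Gv^{\top}=0\)). Hence a \emph{nonzero} \(v\in C^{\perp}\) of Hamming weight \(w\) is nothing but a nontrivial linear relation \(\sum_{i\in T}v_i\,g_i=0\) whose support \(T=\{i:v_i\neq0\}\) has size \(w\) and whose coefficients are all nonzero; conversely every such relation is a weight-\(w\) codeword of \(C^{\perp}\).

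Next I would show \(d^{\perp}=m\), where \(m\) denotes the least size of a linearly dependent set of columns of \(G\). One inequality is immediate: a minimum-weight nonzero \(v\in C^{\perp}\) has support of size \(d^{\perp}\) indexing \(d^{\perp}\) dependent columns, so \(m\le d^{\perp}\). For the reverse, take a dependent column set \(T\) with \(|T|=m\); by minimality of \(m\), any relation supported within \(T\) must involve every index of \(T\) with a nonzero coefficient (otherwise a proper subset of \(T\) would already be dependent), so that relation is a dual codeword of weight exactly \(m\), giving \(d^{\perp}\le m\). Therefore \(d^{\perp}=m\).

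The theorem then follows by unwinding: \(d^{\perp}\ge L\iff m\ge L\iff\) no set of at most \(L-1\) columns of \(G\) is linearly dependent \(\iff\) every set of exactly \(L-1\) columns of \(G\) is linearly independent (smaller sets being independent automatically, as subsets of independent sets). This establishes both implications at once. The degenerate cases should be noted in passing but require no work: if \(L-1>k\) the right-hand side is false because \(\F_q^{\,k}\) contains no independent family of that size, which matches \(d^{\perp}\le k+1\) from the Singleton bound applied to the \([n,n-k,d^{\perp}]\) code \(C^{\perp}\); and if \(C^{\perp}=\{0\}\) one sets \(d^{\perp}=\infty\), so \(G\) is a square invertible matrix and both sides hold for every \(L\).

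The one step I expect to require care in a full write-up is the passage between ``minimum-weight dual codeword'' and ``minimal linearly dependent set of columns'': one must observe that the support of a minimum-weight dual word is a \emph{minimal} dependent set, and conversely that a relation witnessing a minimal dependent set automatically has full support with all coefficients nonzero, so its Hamming weight equals the size of that set rather than something smaller. Once that correspondence is pinned down, the rest is bookkeeping with \(C^{\perp}=\ker G\).
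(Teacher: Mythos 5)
Your proposal is correct and rests on exactly the same mechanism as the paper's proof: the dictionary between nonzero dual codewords \(v\in C^{\perp}=\ker G\) and nontrivial linear relations among the columns of \(G\) indexed by the support of \(v\). You merely repackage the paper's two contradiction arguments as the single equality \(d^{\perp}=m\) (minimum dual weight equals minimum size of a dependent column set), with the minimal-dependent-set observation supplying full-support coefficients; this is a clean but essentially equivalent organization of the same proof.
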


\begin{proof}
\noindent \textbf{(\(\Rightarrow\))} 
Assume \(d^\perp \ge L\). Thus, every nonzero \(v \in C^\perp\) has \(\mathrm{wt}(v)\ge L\).  
If there is a set of \(L-1\) columns of \(G\) that is linearly dependent, say
\(\{\mathbf{g}_{i_1}, \dots, \mathbf{g}_{i_{L-1}}\}\), choose
\(\alpha_1,\dots,\alpha_{L-1}\in \F_q\), not all zero, with
\(\alpha_1\,\mathbf{g}_{i_1}
+\cdots+\alpha_{L-1}\,\mathbf{g}_{i_{L-1}}=0.\)
Define \(v\in \F_q^n\) by \(v_j=\alpha_j\) if \(j \in\{i_1,\dots,i_{L-1}\}\) and
\(v_j=0\) otherwise. Since \(v\cdot G=0\), we get \(v\in C^\perp\) and \(\mathrm{wt}(v)\le L-1\). This contradicts \(d^\perp\ge L\).

\noindent \textbf{(\(\Leftarrow\))} 
Conversely, assume every set of \(L-1\) columns of \(G\) is linearly independent.  
If \(d^\perp<L\), there is a nonzero \(v \in C^\perp\) with
\(\mathrm{wt}(v)\le L-1\). Let 
\[
S=\{\,i\mid v_i\neq0\},\quad |S|\le L-1.
\]
Since \(v\) is orthogonal to every row of \(G\),
\(\sum_{i\in S} v_i\,\mathbf{g}_i = 0.\)
Hence \(\{\mathbf{g}_i : i\in S\}\) are linearly dependent, a contradiction.  
Thus \(d^\perp\ge L\) if and only if every \(L-1\) columns of \(G\) are independent.
\end{proof}

\begin{remark}
Besides clarifying dual distance constraints, this result helps link
q-isotropic codes to self-orthogonal codes \(C\) in~\(\F_2^n\). The value
\(d^\perp\) of \(C\) can dictate the \emph{Clifford distance} of an associated q-isotropic
code. In particular, if \(d^\perp \ge 3\), then the code is \emph{projective} in the classical sense, which implies a minimum Clifford distance of at least 3 for the derived quantum code.
\end{remark}

\section{Conclusion}
\label{sec:conclusion}

We have fully characterized \emph{q-isotropic} subspaces for Ising-anyon quantum codes and established a one-to-one correspondence with classical binary self-orthogonal codes. Specifically, q-isotropic code subspaces in \(\F_2^n\) are exactly the punctured images of self-orthogonal codes in \(\F_2^{n+1}\). This offers a direct method for applying classical coding theory in the design of \emph{Clifford algebra codes}, where parity-sensitive errors (odd Majorana products) and multi-anyon braiding (even products) must both be handled.

Unlike bosonic-to-fermionic mappings such as that in Kundu \emph{et al.}~\cite{KR:majorana}, which restrict stabilizer codes to specific hardware layouts, our approach accommodates the complete Ising-anyon (Majorana) framework by linking classical self-orthogonal codes to commuting even operators. The code's quantum distance is governed by the dual distance of the underlying classical code, allowing one to use standard column-independence bounds or other classical methods to guarantee protection against low-weight errors.

\section{Acknowledgement}

First, I would like to thank Rui Okada on whose Ph.D. thesis \cite{Okada:quantum} this work mainly builds upon. I also thank Rui for his helpful comments on this note. Second, I would like to thank Professor Greg Kuperberg for pointing out the possible connection between the Clifford quantum metric space framework in~\cite{Okada:quantum} and the braiding of Ising anyons, which led me to recognize the relationship to the Majorana stabilizer code framework~\cite{BTL:majorana2010, VF:fermion2017}.
 
\bibliography{custom}

\end{document}